\renewcommand{\figurename}{Fig.}
\title{Overhead-Aware Distributed CSI Selection \\ in the MIMO Interference Channel}
\name{Rami Mochaourab$^*$, Rasmus Brandt$^*$, Hadi Ghauch$^{**}$, Mats Bengtsson$^*$}
\address{$^*$Signal Processing Department, ACCESS Linnaeus Centre, KTH Royal Institute of Technology \\ $^{**}$Communication Theory Department, KTH Royal Institute of Technology}
\newtheorem{definition}{Definition}
\newtheorem{theorem}{Theorem}
\newcommand{\mat}[1]{\boldsymbol{#1}}
\newcommand{\tr}[1]{\text{tr}\left( #1\right)}
\newcommand{\br}[1]{{\left\{ #1 \right\}}}
\newcommand{\fnorm}[1]{{ {\Vert #1 \Vert}_{\text{\scriptsize{F}}} }}
\newcommand{\sfnorm}[1]{{ \Vert #1 \Vert^2_{\text{\scriptsize{F}}} }}
\newcommand{\abs}[1]{{ \left| #1 \right| }}
\DeclareMathOperator*{\argmax}{arg\,max}
\def\H{{{\dag}}} 
\def\bH{{\mat{H}}} 
\def\bZ{{\mat{Z}}} 
\def\bI{{\mat{I}}} 
\def\bV{{\mat{V}}} 
\def\bz{{\mat{z}}} 
\def\setT{{\mathcal{T}}} 
\def\setB{{\mathcal{B}}} 
\def\setR{{\mathcal{R}}} 
\def\setM{{\mathcal{M}}} 
\def\setI{{\mathcal{I}}} 
\def\setJ{{\mathcal{J}}} 
\def\setK{{\mathcal{K}}} 
\def\setS{{\mathcal{S}}} 
\def\setP{{\mathcal{P}}} 
\begin{document}

\maketitle
\begin{abstract}
We consider a MIMO interference channel in which the transmitters and receivers operate in frequency-division duplex mode. In this setting, interference management through coordinated transceiver design necessitates channel state information at the transmitters (CSI-T). The acquisition of CSI-T is done through feedback from the receivers, which entitles a loss in degrees of freedom, due to training and feedback. This loss increases with the amount of CSI-T. In this work, after formulating an overhead model for CSI acquisition at the transmitters, we propose a distributed mechanism to find for each transmitter a subset of the complete CSI, which is used to perform interference management. The mechanism is based on many-to-many stable matching. We prove the existence of a stable matching and exploit an algorithm to reach it. Simulation results show performance improvement compared to full and minimal CSI-T.
\end{abstract}
%
%

\section{Introduction}
\emph{Ultra-dense networks} have been identified as one of the key scenarios in 5G communication systems, characterized by a high number of nodes, located in close proximity, in both heterogeneous and homogeneous networks~\cite{METISD62}. Evidently, \emph{coordination among transmitters and receivers} is vital in such scenarios, where it is well-known that the sum-rate performance is limited by unsuppressed  interference. \emph{Forward-backward training} algorithms, such as~\cite{Gomadam2011,Peters2011, Shi2011}, employ uplink and downlink pilots to estimate the required local channel state information (CSI) quantities, and iteratively refine the precoding / decoding matrices in a cluster of cooperating nodes. 

It becomes clear at this stage that the overhead associated with such clusters is a major concern (especially in the aforementioned dense networks): this motivates the need for schemes that take into account the loss in performance incurred by coordination, training and possibly feedback. This problem was addressed in~\cite{Ayach2012a} where the authors proposed a training and analog feedback scheme that maximizes the \emph{effective sum-rate} (the achievable sum-rate in the network accounting for the loss in degrees of freedom (DoF) due to  overhead). Furthermore, the issue of \emph{user partitioning} was tackled in~\cite{Peters2012} where several schemes for partitioning the users into orthogonal groups are proposed.  

We propose in this work an overhead-aware framework for distributed cooperation in frequency-division-duplex (FDD) systems: sets of potentially cooperating transmitters and receivers are formed using many-to-many stable matching, where the utilities at both transmitters and receivers are designed to take into account both performance of each link, and the associated overhead for channel estimation and feedback. We prove that the formulated model satisfies the conditions for the existence of a many-to-many stable matching. By utilizing an algorithm to reach a stable matching we provide a distributed mechanism to determine the amount of CSI present at each transmitter which is exploited for interference management. This being said, any algorithm for precoder optimization can now be employed to optimize any desired metric (interference leakage, sum-rate, etc.). Finally, our simulations indicate that our proposed distributed scheme offers gains in performance over selected benchmark schemes. 

\emph{Notations:} Column vectors and matrices are given in lowercase and uppercase boldface letters, respectively. $\tr{\cdot}$, $\fnorm{\cdot}$, and $(\cdot)^\H$ denote respectively the trace, Frobenius norm, and Hermitian transpose. $\mat{I}$ is an identity matrix.

\section{System Model} \label{sec:sys_model}
Consider a set of transmitter-receiver pairs $\setK = \br{1,\ldots,K}$ operating in the same spectral band. The transmitters and receiver are equipped with multiple antennas such that transmitter $j$ uses $M_j$ antennas and receiver $k$ uses $N_k$ antennas. The flat fading channel matrix from a transmitter $j$ to a receiver $k$ is $\bH_{jk} \in \mathbb{C}^{N_k \times M_j}$. The received signal at a receiver $k$ is 
\begin{equation} \label{eq:received_signal}
\mat{y}_k = \sum\nolimits_{j=1}^K \sqrt{\gamma_{jk}} \bH_{jk} \bV_j \mat{s}_j + \bz_k,
\end{equation}
where $\mat{s}_j$ is the transmitted signal vector of dimension $d_j$, $\bV_j \in \mathbb{C}^{M_j \times d_j}$ is the transmit precoding matrix at transmitter $j$, $\gamma_{jk}$ is the pathloss coefficient, and $\bz_k \sim \mathcal{CN}(0, \sigma^2 \bI)$ is additive white Gaussian noise. The transmit precoding matrix $\bV_j$ used at transmitter $j$ is restricted to a power constraint such as $\tr{\bV_j \bV_j^\H} \leq P_j, \; j \in \setK.$

Given a pre-log factor $\beta \in [0,1]$, determined by the lost temporal degrees of freedom due to training and overhead \cite{Peters2012, Ayach2012a}, the spectral efficiency of link $k$ is given as
\begin{equation} \label{eq:rate}
R_k = \beta \log_2 \abs{\bI + \gamma_{kk} \bZ_k^{-1} \bH_{kk} \bV_k \bV_k^\H \bH_{kk}^\H},
\end{equation}
where $\bZ_k = \sigma^2 \bI + \sum\nolimits_{j\neq k} \gamma_{jk} \bH_{jk} \bV_j \bV_j^\H \bH_{jk}^\H$ is the interference and noise covariance matrix.

\subsection{CSI-T Sharing Set}
The precoding matrices are designed based on the current network conditions in the form of CSI. We consider an FDD system, meaning that CSI at the receivers (CSI-R) can be obtained by channel training, but due to the non-reciprocal nature of the channel, feedback must be used in order to obtain CSI at the transmitters (CSI-T). Both the channel training and the CSI feedback lead to overhead, whose impact on the spectral efficiency depends on the channel coherence time.

In our model, each receiver $k$ has a non-zero channel from each transmitter $j$ (cf. \eqref{eq:received_signal}). The corresponding links are indexed by the index set $\setJ = \br{(j,k) \mid (j,k) \in \setK \times \setK}$. Due to the path loss factor $\gamma_{jk}$ in \eqref{eq:received_signal}, it is clear that some of the cross-links will interfere more strongly than others. We thus define $\setI \subseteq \setJ$, which will be called a \emph{CSI-T sharing set}. This set specifies what CSI is fed back from the receivers to the transmitters, and should thus correlate with which cross-links are deemed important to treat in the precoder design. Clearly, $(k,k) \in \setI, \; \forall \, k \in \setK$. The output of the stable matching algorithm in Sec.~\ref{sec:matching} will be such a CSI-T sharing set.

The CSI-T sharing set $\setI$ describes how \emph{incomplete} the CSI-T is. This can range from minimal CSI-T ($\setI = \{ (k,k) \mid k \in \setK \}$), to complete CSI-T ($\setI = \setJ$). In certain scenarios, it has been shown that the feasibility of interference alignment can be retained under some level of incomplete CSI-T \cite{Kerret2014}. Given a $\setI \subseteq \setJ$, we now detail how the CSI acquisition will take place, and the corresponding overhead.

\subsection{Overhead Model}
The channel estimation is based on pilot-assisted training and the feedback is based on analog feedback \cite{Ayach2012a}. We assume a block fading model, where the channel is constant for $T$ symbol intervals (a \emph{coherence block}). The CSI must be estimated once per coherence block.

We now propose a simple overhead model. For the channel training, we assume that minimal training is sufficient, i.e. that each channel coefficient can be identified using a single pilot symbol. For the analog feedback, we assume that the minimal number of symbol intervals for orthogonalizing the feedback between users is sufficient for acceptable performance. Admittedly, these assumptions might be coarse approximations of proper system design at low signal to noise ratio (SNR), but they allow us to clearly compare the overhead of training and CSI feedback. Similar approximations have been used before, e.g. in \cite{Peters2012}. Given these assumptions, we now count the number of symbol intervals that are needed in the CSI acquisition to get a measure of the overhead.

\subsubsection{Phases of CSI Acquisition}
The CSI acquisition has the following phases, which are repeated in each coherence block.
\begin{itemize}
    \item[T1] \textbf{Downlink and uplink channel training.} In this phase, the receivers acquire CSI-R which will be fed back to the transmitters during the \emph{F} phase, as well as being used for formulating the receiver utilities in the \emph{SM} phase. To enable succesful decoding in the \emph{F} phase, the uplink channels are also estimated. For pilot orthogonality reasons, this phase requires $\sum_{k=1}^K M_k + N_k$ symbol intervals.
    \item[SM] \textbf{Stable matching.} This phase determines the CSI-T sharing set $\setI$ using the stable matching algorithm in Sec.~\ref{sec:matching}. This phase requires $L_\text{SM}$ symbol intervals of communication, which will be quantified in Sec.~\ref{sec:matching}.
    \item[F]\textbf{Analog CSI feedback.} In this phase, the transmitters acquire the CSI-T needed for the precoder design. Given a CSI-T sharing set $\setI \subseteq \setJ$, receiver $k$ feeds back the CSI for $j \in \{i \mid (i,k) \in \setI \}$ to all transmitters $i \in \setK$. Assuming analog feedback, distributed processing \cite[Sec. III.A]{Ayach2012a} and $N_k \leq M_j$, for all $k, j \in \setK$, this phase requires $\sum_{(j,k) \in \setI} M_j$ symbol intervals. For further details of the feedback mechanics and the resulting symbol interval overhead, see \cite[Sec. III.A]{Ayach2012a}.
    \item[T2] \textbf{Downlink effective channel training.} After precoder optimization, the resulting \emph{effective channels} (i.e. channel matrices multiplied by precoders) are estimated. This requires $\sum_{k=1}^K d_k$ symbol intervals.
\end{itemize}
Summing up the number of symbol intervals needed for training and feedback, the CSI acquisition overhead is
\begin{equation}\label{eq:overhead}
 L_\text{CSI} = {\sum\nolimits_{k=1}^K \left( M_k + N_k + d_k \right)} + {\sum\nolimits_{(j,k) \in \setI} M_j}.
\end{equation}
Note that the first term in the summation (accumulated training overhead) is linear in $K$, whereas the second term (feedback overhead) in the worst case becomes quadratic in $K$. Hence, in terms of overhead reduction, there are large gains to be anticipated by reducing the amount of feedback.

After the five phases of CSI acquisition, data transmission takes place during the remaining $L_\text{data} = T - L_\text{CSI} - L_\text{SM}$ symbol intervals. Then, the pre-log factor in \eqref{eq:rate} is 
\begin{equation}\label{eq:prelog}
\beta = {L_\text{data}}/{T} = 1 - ({L_\text{CSI} + L_\text{SM}})/{T}.
\end{equation}
It is now clear that optimizing $R_k$ in \eqref{eq:rate} becomes a tradeoff between better interference management (higher spectral efficiency factor) and lower overhead (higher pre-log factor). Alhough the above method is inherently distributed, it enables transmitters within the CSI-T sharing set to have ``global CSI-T'' and thus perform the precoding in a centralized fashion. Note however that the complexity of exhaustively searching all possible CSI-T sharing sets is $2^{K^2-K}$. Next, we detail an efficient and distributed procedure based on stable matching which will determine the CSI-T sharing set.


\section{Distributed CSI Selection}\label{sec:matching}
Many-to-many stable matching has been of interest for its application in the job matching problem \cite{Roth1984}. There, a set of firms and a set of workers exist, where each firm has a set of vacant positions to offer to workers, and each worker can work at more than one firm. The interest of a firm is to hire the best workers and each worker's interest is to work at the most preferred combination of firms. The solution of the job matching problem is a many-to-many stable matching. 

The job matching problem relates to the problem in our setting in the following. First, we seek a matching between the set of transmitters (firms) and the receivers (workers) which dictates the CSI-T sharing set. Second, the stability of the solution concept supports a distributed implementation of the mechanism.



\subsection{Stable Matching Model}
Consider the set of transmitters $\setT = \br{tx_1,\ldots,tx_K}$ and the set of receivers $\setR= \br{rx_1,\ldots,rx_K}$. A matching between the two sets is defined as follows.%
\begin{definition}\label{def:Matching}
A matching $\setM$ is a correspondence from the set $\setT \cup \setR$ to the set of all subsets of $\setT \cup \setR$ and satisfies the following properties for $j \in \setT$ and $k \in \setR$:
\begin{enumerate}
\item $\setM(j) \in \setR \cup \emptyset$, and $\setM(k) \in \setT \cup \emptyset$,
\item $k \in \setM(j) \text{ if and only if } j \in \setM(k)$.
\end{enumerate}
\end{definition}

The matching $\setM$ is a set valued function such that $\setM(j)$ is the set of receivers matched to transmitter $j\in \setT$ and $\setM(k)$ is the set of transmitters matched to receiver $k\in \setR$. If a transmitter $j$ is unmatched, then $\setM(j) = \emptyset$. Similarly, $\setM(k) = \emptyset$ means that receiver $k$ is unmatched. Condition 2. in Definition~\ref{def:Matching} ensures that whenever a transmitter $j$ is matched to a receiver $k$ then $k$ would be also matched to transmitter $j$.

\subsubsection{Receiver and Transmitter Preference Sets}
Each receiver $k \in \setR$ must have a strict, transitive and complete preference relation over the set $2^\setT$ containing all subsets of $\setT$. For a given set of transmitters $\setB \subseteq \setT$, a receiver $k$ is able to select the most preferred subset of $\setB$ by solving:
\begin{equation}\label{eq:pref_profile_receiver}
\mathrm{C}^{rx}_k(\setB) = \argmax_{\setS \in \setB} ~~ \sum\nolimits_{j \in \setS} \phi^{rx}_k(j) ~~ s.t. ~~ \abs{\setS} \leq q_k^\mathrm{rx},
\end{equation}
\noindent where the functions $\phi^{rx}_k(j)$ are defined as
\begin{equation}\label{eq:phi_rx}
\phi^{rx}_k(j) =
\begin{cases} \frac{\sqrt{\gamma_{jk}} \fnorm{\bH_{jk}^\H \bH_{kk} }}{\sqrt{\gamma_{kk}} \sfnorm{\bH_{kk}}} &\mbox{if } j\in \setT \\ 
0 & \mbox{if } j = \emptyset
\end{cases},
\end{equation}
and $q_k^\mathrm{rx} \in \mathbb{N}$ is called the matching quota of receiver $k$. The function in \eqref{eq:phi_rx} reflects a measure on how much the channel from transmitter $j$ is aligned to the direct channel. This model shares similarities with the utility functions formulated in \cite{Jorswieck2013} in the context of cognitive radio. If this measure is large, then the corresponding transmitter $j$ could potentially generate substantial amount of interference at receiver $k$, and hence an effort in terms of CSI feedback and precoding should be taken to manage it. An unmatched receiver has zero utility.


In \eqref{eq:pref_profile_receiver}, a receiver $k$ prefers to be matched with the transmitters which maximize the sum of the measures defined in \eqref{eq:phi_rx}, with the constraint that the total number of matched transmitters is not more than a design integer $q^\mathrm{rx}_k \in \mathbb{N}$. Observe, that Problem \eqref{eq:pref_profile_receiver} can be solved with low complexity by a greedy method since the objective functions are additively separable \cite{Federgruen1986}.

Similarly, each transmitter $j$ must have a strict, transitive and complete preference relation over the set $2^\setR$ of all subsets of $\setR$. Given a set of receivers $\setB \subseteq \setR$, we define the subset of $\setB$ which transmitter $j$ prefers the most as
\begin{equation}\label{eq:pref_profile_transmitter}
\mathrm{C}^{tx}_j(\setB) = \argmax_{\setS \in \setB} ~~ \sum\nolimits_{k \in \setS} \phi^{tx}_j(k) ~~ s.t. ~~ \abs{\setS} \leq q^\mathrm{tx}_j,
\end{equation}
\noindent where 
\begin{equation}\label{eq:phi_tx}
\phi^{tx}_j(k)
\begin{cases} \gamma_{jk} P_j \sfnorm{\bH_{jk}}/(M_jN_j) &\mbox{if } k\in \setR\\ 
0 & \mbox{if } k = \emptyset
\end{cases},
\end{equation}
and $q^{tx}_j\in \mathbb{N}$ is the quota of transmitter $j$. The function in \eqref{eq:phi_tx} for a transmitter $j$ is increasing in the strength of the channel to a receiver $k$ and decreasing with the number of antennas. Since the number of antennas is proportional to the channel feedback overhead for CSI-T acquisition, the ratio in \eqref{eq:phi_tx} reflects the amount of interference relative to overhead. If this measure is high, then a matching between transmitter $j$ and receiver $k$ is desirable.%
%
%
\subsubsection{Many-to-many Stable Matching}
The first of two requirements for stability is the following.
\begin{definition}\label{def:ind_rational}
Matching $\setM$ is {individually rational} if 
\begin{enumerate}
\item no transmitter $j \in \setT$ exists with $\setM(j) \neq \mathcal{C}^{tx}_j(\setM(j))$, 
\item no receiver $k \in \setR$ exists with $\setM(k) \neq \mathcal{C}^{rx}_k(\setM(k))$.
\end{enumerate}
\end{definition}
Condition 1. says that a matching $\setM$ is not individually rational for a transmitter $j$ if the set of receivers $\setM(j)$ matched to transmitter $j$ are not all within the solution of the optimization in \eqref{eq:pref_profile_transmitter} with $\setM(j)$ as input. Analogously condition 2. for a receiver $k$. 

\begin{definition}\label{def:blocked}
Matching $\setM$ is pairwise stable if there does not exist a pair $(k,j) \in \setR \times \setT$ such that 
\begin{enumerate}
\item $k \neq \setM(j)$
\item $k \in \mathcal{C}^{tx}_j(\setM(j) \cup \{k\})$ and $j \in \mathcal{C}^{rx}_k(\setM(k) \cup \{j\})$
\end{enumerate}
\end{definition}
Pairwise stability requires that there exist no receiver $k$ and no transmitter $j$ which are not matched to each other but prefer a matching between themselves. The two conditions in the second requirement mean, respectively, that transmitter $j$ is in the solution set of receiver $k$'s optimization problem in \eqref{eq:pref_profile_receiver} given the set $\setM(k) \cup \{j\}$ as input, and receiver $k$ would be in the solution set of the optimization problem of transmitter $j$ in \eqref{eq:pref_profile_transmitter} with $\setM(j) \cup \{k\}$ as input.

\begin{definition}\label{def:stable_matching}
A matching $\setM$ is \emph{stable} if it is individually rational (Definition \ref{def:ind_rational}) and pairwise stable (Definition \ref{def:blocked}).
\end{definition}

Generally, a many-to-many stable matching need not exist. We prove this issue positively in the following.
\begin{theorem}
A stable matching exists in our setting.
\end{theorem}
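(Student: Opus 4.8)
The plan is to show that the preference structure induced by the choice functions in \eqref{eq:pref_profile_receiver} and \eqref{eq:pref_profile_transmitter} satisfies the \emph{substitutability} condition of Roth \cite{Roth1984}, and then invoke his existence theorem for many-to-many matching markets. Recall that a choice function $\mathrm{C}(\cdot)$ is substitutable if, whenever an agent $a$ is chosen from a set $\setB$ (i.e. $a \in \mathrm{C}(\setB)$), that agent is still chosen from every smaller set still containing it: $a \in \mathrm{C}(\setB')$ for all $\setB' \subseteq \setB$ with $a \in \setB'$. Equivalently, removing \emph{other} agents from the offered set never causes $a$ to be dropped. Roth's result states that if the choice functions of all agents on both sides are substitutable, a pairwise-stable (and individually rational) matching exists; moreover it can be reached by a deferred-acceptance-type algorithm, which is what the paper wants for Sec.~\ref{sec:matching}.

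The key step, therefore, is to verify substitutability of $\mathrm{C}^{rx}_k$ and $\mathrm{C}^{tx}_j$. This is where the additive-separability of the objectives in \eqref{eq:pref_profile_receiver} and \eqref{eq:pref_profile_transmitter}, together with the cardinality-only constraint $\abs{\setS} \le q_k^{\mathrm{rx}}$ (resp.\ $\le q_j^{\mathrm{tx}}$), does the work. Because the objective is a sum of per-agent weights $\phi^{rx}_k(j)$ and the only constraint caps the number of selected agents, the greedy rule is optimal \cite{Federgruen1986}: $\mathrm{C}^{rx}_k(\setB)$ simply picks the $\min(q_k^{\mathrm{rx}}, \abs{\setB})$ transmitters in $\setB$ with the largest $\phi^{rx}_k$-values. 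If $j \in \mathrm{C}^{rx}_k(\setB)$, then $j$ is among the top $q_k^{\mathrm{rx}}$ elements of $\setB$ by $\phi^{rx}_k$; for any $\setB' \subseteq \setB$ with $j \in \setB'$, the rank of $j$ among the elements of $\setB'$ can only improve (fewer competitors), so $j$ remains in the top $q_k^{\mathrm{rx}}$ of $\setB'$, i.e. $j \in \mathrm{C}^{rx}_k(\setB')$. This gives substitutability for the receivers; an identical argument with $\phi^{tx}_j$ and $q_j^{\mathrm{tx}}$ gives it for the transmitters. (One should also note the preferences are strict, as assumed in the model, so the argmax is a singleton and ties need not be worried about; if the paper wants to be careful it can remark that generic channel realizations make the $\phi$-values distinct.)

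Once substitutability on both sides is established, the existence of a stable matching follows directly by citing Roth's theorem \cite{Roth1984}, completing the proof. I expect the substitutability verification to be the only real content; it is not hard, but it is the step that genuinely uses the structure of \eqref{eq:pref_profile_receiver}--\eqref{eq:phi_tx} (separable objective, pure cardinality constraint) rather than being a generic matching-theory fact. A secondary, minor obstacle is bookkeeping: making sure the diagonal links $(k,k)$ are handled — either by assuming $q_k^{\mathrm{rx}}, q_j^{\mathrm{tx}} \ge 1$ and noting that $\phi$-values can be set so that own-links are always selected, or by treating the own-link as always matched and applying the matching only to the cross-links — so that the resulting $\setM$ indeed yields a CSI-T sharing set $\setI$ with $(k,k)\in\setI$ for all $k$, as required in Sec.~\ref{sec:sys_model}.
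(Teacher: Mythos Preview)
Your proposal is correct and takes essentially the same approach as the paper: both verify that the choice functions in \eqref{eq:pref_profile_receiver} and \eqref{eq:pref_profile_transmitter} satisfy the (gross) substitutes condition and then invoke Roth's existence theorem \cite{Roth1984}. The only difference is presentational: the paper dispatches the substitutability verification by citing \cite{Gul1999} and noting that the choice functions are $q$-satiations of additive valuations, whereas you give the direct ``top-$q$ by weight'' greedy argument explicitly; the underlying reasoning is the same.
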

\begin{proof}
A sufficient condition for the existence of a stable matching (Definition \ref{def:stable_matching}) is when the preferences of the transmitters and receivers satisfy the gross substitute property \cite{Roth1984}. Our formulation of the utility functions satisfy this property according to \cite[Section 2]{Gul1999} since \eqref{eq:pref_profile_receiver} and \eqref{eq:pref_profile_transmitter} correspond to the q-satiation of the performance. 
\end{proof}
A stable matching in our setting depends strongly on the choice of quotas for each transmitter and receiver, incorporated in problems \eqref{eq:pref_profile_receiver} and \eqref{eq:pref_profile_transmitter}. Efficient modeling for the quotas seems to be a hard task which will depend on several parameters in the considered scenario. For a symmetric system, it is possible to determine the quotas as a function of required pre-log factor.
\begin{theorem}\label{thm:quotas}
Assume a symmetric network in which $M_k = M, N_j = N, d_k = d$ for all $k \in \setK$. In addition let $q^{rx}_k = q^{tx}_k = q$ for all $k \in \setK$, then for a given minimum desired pre-log factor $\hat{\beta} \in [{\beta}^{min},\beta^{max}]$ where 
\begin{align} \label{eq:betamin}
{\beta}^{min} & = 1 - {K(2M + N + d)}/{T},\\ \label{eq:betamax}
\beta^{max} &= 1 - {K(KM + M + N + d)}/{T},
\end{align}
the following condition should hold
\begin{equation}\label{eq:cond_quota}
q \leq \lfloor{T(1-\hat{\beta})}/{KM} - 1 - {(N+d)}/{M}\rfloor.
\end{equation}
\end{theorem}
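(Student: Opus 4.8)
The plan is to turn the common quota $q$ into an explicit upper bound on $\abs{\setI}$, substitute it into the overhead expression \eqref{eq:overhead}, and then invert the pre-log relation \eqref{eq:prelog} for the worst case over all stable matchings. First I would specialize \eqref{eq:overhead} to the symmetric network: with $M_k=M$, $N_k=N$, $d_k=d$ the accumulated training term collapses to $K(M+N+d)$ and the feedback term to $M\abs{\setI}$, giving $L_\text{CSI}=K(M+N+d)+M\abs{\setI}$. Inserting the two extremes $\abs{\setI}=K$ (minimal CSI-T, only the direct links, which forces $q=1$) and $\abs{\setI}=K^2$ (complete CSI-T, $\setI=\setJ$, which needs $q=K$) into $\beta=1-L_\text{CSI}/T$ reproduces $\beta^{min}$ in \eqref{eq:betamin} and $\beta^{max}$ in \eqref{eq:betamax}; this sanity-checks the constants and identifies these two values as the pre-log factors realizable as $\abs{\setI}$ varies between $K$ and $K^2$.

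The key step is the bound $\abs{\setI}\le Kq$. I would derive it directly from the matching rules: the sharing set $\setI$ produced by the algorithm in Sec.~\ref{sec:matching} is the set of matched pairs, and by the constraint in \eqref{eq:pref_profile_receiver} every receiver $k$ retains at most $q^\mathrm{rx}_k=q$ matched transmitters (with its own transmitter $tx_k$ counted among them), so each receiver contributes at most $q$ pairs irrespective of the channel realizations. Hence $L_\text{CSI}\le K(M+N+d)+KqM=K\big((q+1)M+N+d\big)$ for every stable matching, and therefore $\beta\ge 1-K\big((q+1)M+N+d\big)/T$ uniformly over the possible outcomes. Requiring this guaranteed value to be at least $\hat\beta$ and solving for $q$ gives $qM\le T(1-\hat\beta)/K-M-(N+d)$, i.e.\ $q\le T(1-\hat\beta)/(KM)-1-(N+d)/M$, and integrality of $q$ then yields \eqref{eq:cond_quota}. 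I would close by noting that the right-hand side of \eqref{eq:cond_quota} equals $1$ at $\hat\beta=\beta^{min}$ and $K$ at $\hat\beta=\beta^{max}$, which is exactly why $\hat\beta$ is taken in the stated range.

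I do not anticipate a real obstacle: once $\abs{\setI}\le Kq$ is established the rest is the arithmetic of inverting \eqref{eq:prelog}. The points needing care are that this bound must be justified from the quota in \eqref{eq:pref_profile_receiver} and must hold for \emph{every} stable matching rather than on average; that the direct links $(k,k)$ have to be counted inside the quota (otherwise the additive constant in \eqref{eq:cond_quota} would be $-2$ instead of $-1$); and that the stable-matching overhead $L_\text{SM}$ is being neglected, since \eqref{eq:betamin}--\eqref{eq:cond_quota} carry only $L_\text{CSI}$; this last simplification I would state explicitly.
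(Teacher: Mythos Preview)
Your proposal is correct and follows essentially the same route as the paper: specialize \eqref{eq:overhead} to the symmetric case, convert the quota into the feedback term $qKM$, and invert \eqref{eq:prelog}. The only substantive difference is in how the feedback term is pinned down. The paper argues that in a stable matching with common quota $q$ every node is matched with \emph{exactly} $q$ partners (invoking individual rationality, together with the positivity of the utilities in \eqref{eq:phi_rx}--\eqref{eq:phi_tx}), so the feedback overhead is \emph{equal} to $qKM$; you instead use only the upper bound $\abs{\setI}\le Kq$ coming directly from the quota constraint in \eqref{eq:pref_profile_receiver}. Your inequality suffices for the sufficient-condition reading of the theorem and is the safer statement to prove; the paper's equality gives a sharper characterization but needs the extra observation that quotas are actually saturated. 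Your remarks about counting the direct links inside the quota and about dropping $L_\text{SM}$ are points the paper leaves implicit.
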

\begin{proof}
First observe that since each transmitter and receiver has the same quota $q$, each transmitter (receiver) will be matched with $q$ other receivers (transmitters) in a stable matching. This fact follows from the individual rationality assumption (Definition \ref{def:ind_rational}). Consequently, the analog CSI feedback overhead is equal to $qKM$. Using \eqref{eq:prelog} we can formulate the condition on the quota $q$ in \eqref{eq:cond_quota}. The bounds on $\hat{\beta}$ correspond to minimal and full CSI-T sharing sets.
\end{proof}

\subsubsection{Algorithm and Distributed Implementation}
Algorithm~\ref{alg:SM} describes the steps to reach a many-to-many stable matching which is based on the proposed algorithm in \cite{Roth1984}. In Step 3, each receiver ``applies'' to its most preferred transmitters by sending a message to each of them. This message includes the norm of the interference channel from the transmitter to the applying receiver which is needed for the transmitter's optimization problem in \eqref{eq:pref_profile_transmitter}. The overhead for sending this message is assumed to be a single symbol interval (Step 4). On receiving all applications, each transmitter accepts the receivers based on its optimization problem in \eqref{eq:pref_profile_transmitter} and rejects the rest. Here, the binary decisions are communicated to the applying receivers using a single symbol interval. 

During the algorithm, the receivers apply only to the transmitters which have not rejected them before. The transmitters, upon obtaining the applications from the receivers, choose the ones they prefer most and which they have not rejected before. Algorithm \ref{alg:SM} is guaranteed to converge to a stable matching \cite[Proposition 5]{Roth1984}. Moreover, the algorithm's complexity is fairly low given that each receiver applies at most one time to any transmitter \cite[Proposition 3]{Roth1984}.
\begin{algorithm}[t!]
\begin{algorithmic}[1]
\Statex Initialize:  $i = 0, \setT^0_k = \setT$, for all $k \in \setR$, $L_{SM} = 0$
\Repeat
\For{$k \in \setR$} 
\State \parbox[t]{\dimexpr\linewidth-\algorithmicindent-\algorithmicindent}{Receiver $k$ applies to $\mathcal{C}^{rx}_k(\setT^i_k)$ with no transmitter in $\setT^i_k$ has rejected him before.\strut}
\State \parbox[t]{\dimexpr\linewidth-\algorithmicindent-\algorithmicindent}{$L_{SM} = L_{SM} + 1$;\strut}
\EndFor
\For{$j \in \setT$} 
\State \parbox[t]{\dimexpr\linewidth-\algorithmicindent-\algorithmicindent}{Transmitter $j$ accepts $\mathcal{C}^{tx}_k(\setP^i_j)$ with no receiver in 
\begin{equation}\label{eq:applications_to_j}
\setP^i_j = \{\setS \subseteq \setR \mid j \in \mathcal{C}^{rx}_k(\setT^i_k),\; \forall k \in \setS\}
\end{equation}
it rejected before.\strut}
\State Transmitter $j$ rejects $\setP^i_j \setminus \mathcal{C}^{tx}_k(\setP^i_j).$
\State \parbox[t]{\dimexpr\linewidth-\algorithmicindent-\algorithmicindent}{$L_{SM} = L_{SM} + 1$;\strut}
\EndFor
\State $i = i + 1$;
\Until{no receiver applies to any transmitter}
\end{algorithmic}
\caption{\label{alg:SM}Stable matching algorithm \cite{Roth1984}.} 
\end{algorithm}%
%
%

\section{Numerical Results}\label{sec:simulations}
We consider $25$ transmitter-receiver pairs. The transmitters are uniformly distributed in a $250 \times 250$ m$^2$ region, and each receiver is randomly located at a distance of $50$ m from its transmitter. We set the number of antennas as $N_k = M_j = 5$ for all $j,k$, and the number of data streams per user is two. We assume Rayleigh fading such that $\mat{H}_{jk} \sim \mathcal{CN}(0,\mat{I})$ for all $j,k \in \setK$. Let the distance between transmitter $j$ and receiver $k$ be $a_{jk}$. The pathloss coefficient is modeled as $\gamma_{jk} = a_{jk}^{-3}$. The coherence block is assumed $T = 10^4$ symbol intervals. 

Utlizing Theorem~\ref{thm:quotas} to determine the quotas, we let the desired pre-log factor to be decreasing in $\mathrm{SNR}:=({a_{kk}^{-\delta}})/{\sigma^{2}}$ as $\hat{\beta} = \alpha(SNR) \beta^{min} + (1-\alpha(SNR))\beta^{max}$, where we introduce the function $\alpha(SNR) = \log(1 + SNR/(1+SNR))$.

In \figurename~\ref{fig:sum_rate}, the average sum spectral efficiency in \eqref{eq:rate} is plotted for different CSI selection schemes. The average is taken over $250$ random system deployments. The overhead in CSI training corresponding to the plots in \figurename~\ref{fig:sum_rate} is shown in \figurename~\ref{fig:time}. In the minimal CSI-T sharing scheme, the transmitters only have CSI-T regarding the direct channels. Precoding matrices are then calculated according to the eigenvectors of the direct channels. In the full CSI-T sharing scheme, complete CSI-T is present at all transmitters. This scheme requires the largest amount of overhead in the F phase (Sec. 2.2.1) as is shown in \figurename~\ref{fig:time}. For calculating the transceivers, we use the alternating optimization algorithm in \cite{Shi2011} in which we fix the maximum number of iterations to five. It can be seen in \figurename~\ref{fig:sum_rate} that the performance of minimal CSI-T sharing is relatively high at low SNR while at high SNR interference management is necessary despite a high overhead in CSI-T acquisition. 

The stable matching CSI-T selection scheme corresponds to Algorithm~\ref{alg:SM}. After the F phase, the algorithm from \cite{Shi2011} is applied at the transmitters using the possibly incomplete CSI-T. The performance of the stable matching scheme is shown to outperform minimal and full CSI-T sharing. By neglecting the induced overhead, we compare our distributed algorithm with the centralized greedy selection scheme proposed in \cite{Westreicher2012} for determining the users which perform interference alignment in partially connected networks. It can be observed that the performances of the two algorithms are very similar, where we have used our novel model for the quotas for both algorithms. When consider the actual overhead $L_{SM}$, it can be seen that the incurred performance loss is very small in stable matching. This overhead $L_{SM}$, which is influenced by the complexity of the stable matching algorithm, is relatively small compared to the overhead in CSI feedback.

\begin{figure}[t]
  \centering
  \includegraphics[width=\linewidth,clip]{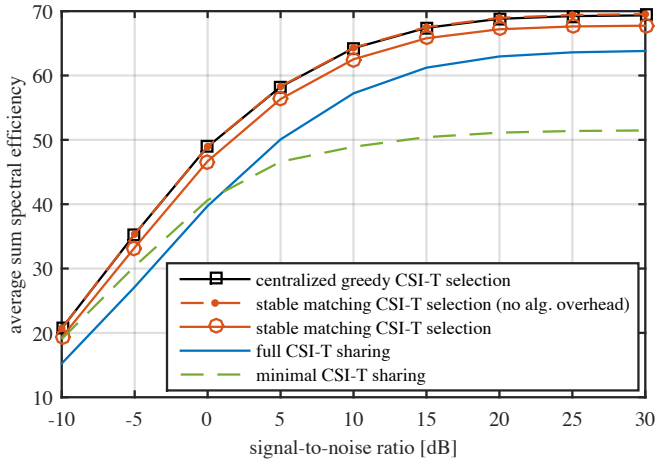}
  \vspace{-.7cm}
  \caption{\label{fig:sum_rate} Performance of different CSI-T selection schemes.}
\end{figure}
\begin{figure}[t]
  \centering
  \includegraphics[width=\linewidth,clip]{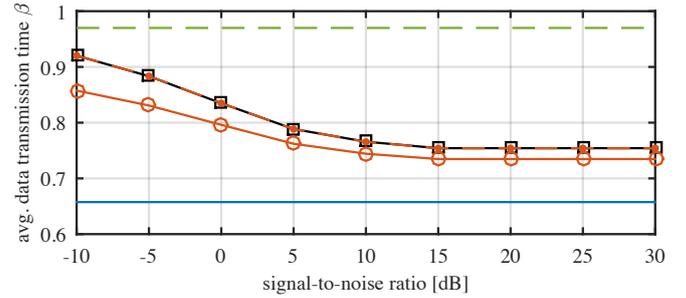}
  \vspace{-.7cm}
  \caption{\label{fig:time} Corresponding average data transmission time $\beta$.}
\end{figure}

\bibliographystyle{IEEEbib}
\bibliography{references}

\end{document}